\newtheorem{Theorem}{Theorem}
\newtheorem{lemma}{\textbf{Lemma}}
\newtheorem{DD}{Definition}
\begin{document}
\title{Partial Third-Party Information Exchange with Network Coding}
\author{Xiumin Wang,~
        Chau Yuen
\thanks{Xiumin Wang is with School of Computer and Information, Hefei University of Technology, China. E-mail: wxiumin@hfut.edu.cn.}
\thanks{Chau Yuen is with Singapore University of Technology and Design, Singapore. Email: yuenchau@sutd.edu.sg.}
\thanks{This research is partly supported by the International Design Center (grant no. IDG31100102 \& IDD11100101). It is also supported in part by the Fundamental Research
Funds for the Central Universities (No. 2012HGBZ0640).}
}

\maketitle

\pagenumbering{arabic}

\begin{abstract}
In this paper, we consider the problem of exchanging channel state
information in a wireless network such that a subset of the clients
can obtain the complete channel state information of all the links
in the network. We first derive the minimum number of required
transmissions for such partial third-party information exchange
problem. We then design an optimal transmission scheme by
determining the number of packets that each client should send, and
designing a deterministic encoding strategy such that the subset of
clients can acquire complete channel state information of the
network with minimal number of transmissions. Numerical
results show that network coding can efficiently reduce the number
of transmissions, even with only pairwise encoding.

\end{abstract}

\section{Introduction}
To increase throughput and network efficiency, it is always
beneficial for clients in a wireless network to have a certain level
of global knowledge of the network, for example link loss
probability that is related to the network connectivity, or channel
state information (CSI) that is related to the channel quality.
Generally, such information, e.g. link loss probability and CSI, on
a link $(i,j)$ is regarded as a local and common information between
two connected nodes $i$ and $j$, and it is unknown to a {\em
third-party} node, e.g., the node $k\neq i,j$. Thus, the problem of
letting third-party nodes to know the information that is local to
the other nodes is an important problem for network design
\cite{Love2007,Wang2012b}.

Recently, cooperative data exchange
\cite{ElRouayheb2010,Sprintson2010,Tajbakhsh2011,Milosavljevic2011}
with network coding  \cite{Ahlswede2000,Katti2006} has become a
promising approach for efficient data communication. In cooperative
data exchange, the clients in a network exchange the packets via a
lossless common/broadcast channel. Inspired by cooperative data
exchange, the works in \cite{Love2007}\cite{Wang2012b} proposed a
network coding based {\em third-party information exchange} where
the clients exchange their local CSI through a lossless
common/broadcast channel such that each client finally gets the
complete CSI of the whole network. Specifically, the work in
\cite{Love2007} aims to minimize the total number of transmissions,
while the work in \cite{Wang2012b} tries to minimize the total
transmission cost required to complete the information exchange.
However, both of these works assume that all the clients in the
network need to get the complete CSI. In a practical system, there
could be only a subset of the clients, e.g. only the one with
information to transmit or receive, need to get the complete CSI at
the same time in most cases.

Hence, different from the previous works
\cite{Love2007}\cite{Wang2012b}, we study a practical scenario where
only a subset of the clients in the network need to acquire the
complete CSI in this paper. To simplify the presentation, such
third-party information exchange for only a subset of clients is
denoted as {\em partial third-party information exchange}. We aim to
propose a network coding based solution to minimize the total number
of transmissions required during partial third-party information
exchange. The main contributions of this paper can be summarized as
follows:
\begin{itemize}
\item We derive the minimum number of transmissions for such partial third-party information exchange.
\item We propose an optimal transmission scheme, which determines the number of packets that each client should send so as to achieve the minimal number of transmissions.
\item We design a deterministic encoding strategy to make sure that
with the proposed transmission scheme, the subset of clients that
require the complete CSI can successfully decode/obtain the full
information.
\item Numerical results show that the proposed transmission scheme and encoding strategy can efficiently reduce the number of transmissions. 
\end{itemize}


\section{Problem Description}\label{Sec.definition}
Consider a network with $N$ clients in $C=\{c_1,c_2,\cdots, c_N\}$.
Let $x_{i,j}$ represent the link loss probability or CSI of the
link between clients $c_i$ and $c_j$, where each client $c_i$ only
knows the local information initially, i.e., client $c_i$ only holds
the CSI in $X_i=\{x_{i,j}|\forall j\in\{1,2,\cdots,N\}\setminus
\{i\}\}$. We assume that the links are symmetric, i.e.,
$x_{i,j}=x_{j,i}$ for $\forall i,j$, so for every two clients $c_i$
and $c_{j}$, they hold one common CSI $x_{i,j}$. Thus, the set of
all the CSI in the network is $X=\{x_{i,j} | 1 \leq i,j \leq N,
i \neq j \}$, and the total number of the packets is
$|X|=\frac{N(N-1)}{2}$.

Instead of letting all the clients get the complete CSI \cite{Love2007}\cite{Wang2012b},
in this paper, we consider the case that only a subset of clients, $C'\subseteq
C$, need to get the complete CSI in $X$. Without loss of
generality, we assume that the first $k$ clients in $C$ want all the
CSI in $X$, i.e. $C'=\{c_1,c_2,\cdots, c_k\}$, $k\leq N$.
We also use $\overline{X_i}$ to denote the
set of ``wanted" packets by client $c_i\in C'$, i.e.,
$\overline{X_i}=X\backslash X_i\subseteq X$.

As in \cite{Love2007}\cite{Wang2012b}, there is a lossless
common/broadcast channel for clients to exchange information. Let
$y_i$ be the number of packets that client $c_i$ should send. Then, the
total number of transmissions sent by all the clients in $C$ (notice
that although only a subset of clients $C'$ requires the full
information, all clients in $C$ should participate in sharing their
information) can thus be written as
{\small\begin{eqnarray}\label{objective.cost} Y=\sum_{i=1}^Ny_i
\end{eqnarray}}

Recent works
\cite{Love2007,Wang2012b,ElRouayheb2010,Sprintson2010,Tajbakhsh2011,Milosavljevic2011}
show that network coding can efficiently save the number of
transmissions for data exchange problem. Thus, after determining the
number of transmissions that each client should send, we design an
encoding strategy based on network coding \cite{Ahlswede2000}, where
a linear encoded packet will be generated based on the packets that
the sender initially has over a finite field.

In the following sections, we will derive the minimum number of transmissions required for partial third-party information exchange in Section~\ref{Sec.lower}. Then, in Section~\ref{Sec.optimal}, we propose an optimal transmission scheme, which can achieve the minimum number of transmissions. Based on the proposed transmission scheme, in Section~\ref{Sec.encoding1}, we design a deterministic encoding strategy to make sure that each client in $C'$ can successfully decode/obtain the complete CSI. We compare the performance in Section~\ref{Sec.comp}.


\section{The Minimum Number of Transmissions}\label{Sec.lower}
In this section, we theoretically derive the minimum number of transmissions for the partial third-party information exchange problem.

We use $Y_{min}$ to denote the minimum number of transmissions
required for the partial third-party information exchange problem.
We have the following lemma.

\begin{lemma}\label{lemma.low}
The minimum number of transmissions required for the partial
third-party information exchange problem is lower bounded as
{\small\begin{eqnarray}\label{low} Y_{min}\geq
\frac{(N-1)(N-2)}{2}+\lceil{\frac{k-2}{2}}\rceil
\end{eqnarray}}
\end{lemma}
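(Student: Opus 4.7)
The plan is to combine two cut-type lower bounds. The first is the baseline cut: each client $c_l \in C'$ holds only the $N-1$ packets of $X_l$ and must decode the $|\overline{X_l}| = (N-1)(N-2)/2$ missing packets, so the $Y - y_l$ transmissions originating from the other clients must supply at least $(N-1)(N-2)/2$ linearly independent directions. This yields $Y - y_l \geq (N-1)(N-2)/2$ for every $l \in \{1,\ldots,k\}$, which already accounts for the leading term of the bound.

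The additive $\lceil (k-2)/2 \rceil$ must come from a finer count that isolates the transmissions originating in $C'$. The key observation is that the $\binom{k-1}{2}$ packets $P_l = \{x_{a,b} : a,b \in C'\setminus\{l\},\, a\neq b\}$ are jointly private to $C'$: for every $c_i \notin C'$ one has $P_l \cap X_i = \emptyset$, so every coded transmission from such a $c_i$, being a linear combination of $X_i$ over the natural coordinate basis of $X$, has zero coefficient on each element of $P_l$. Consequently, only the $T - y_l$ transmissions from $C'\setminus\{c_l\}$, where $T := \sum_{i=1}^k y_i$, can span the $(k-1)(k-2)/2$-dimensional subspace of packets in $P_l$ that $c_l$ has to recover. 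This forces $T - y_l \geq (k-1)(k-2)/2$ for every $l \in C'$.

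Summing this refined inequality over $l \in C'$ gives $(k-1)T \geq k(k-1)(k-2)/2$, hence $T \geq k(k-2)/2$, so by averaging some $l^\star \in C'$ satisfies $y_{l^\star} \geq (k-2)/2$; since $y_{l^\star}$ is a nonnegative integer, $y_{l^\star} \geq \lceil (k-2)/2 \rceil$. Substituting $l = l^\star$ into the baseline cut produces the claimed $Y \geq (N-1)(N-2)/2 + \lceil (k-2)/2 \rceil$. The step I expect to require the most care is the refined count: one must argue not merely that the packets in $P_l$ are combinatorially private to $C'$, but that this privacy translates into a genuine degrees-of-freedom statement about the coded transmissions, which in turn rests on the fact that a linear combination of $X_i$ has no support outside $X_i$ in the packet basis of $X$. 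Once that is nailed down, the pigeonhole-plus-integrality finish is routine.
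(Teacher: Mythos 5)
Your proof is correct and follows essentially the same route as the paper: the same two families of cut inequalities (each $c_l\in C'$ must receive at least $\frac{(N-1)(N-2)}{2}$ packets from the others, and at least $\binom{k-1}{2}$ of them from $C'\setminus\{c_l\}$), summed over $C'$ and combined. The only differences are cosmetic improvements: you justify the second inequality directly via the observation that the packets $\{x_{a,b}: a,b\in C'\setminus\{l\}\}$ have zero support in every transmission from outside $C'$ (the paper simply cites \cite{Wang2012b} for this), and your pigeonhole-plus-integrality finish on a single $y_{l^\star}$ handles the ceiling more cleanly than the paper's averaging step, which tacitly relies on the integrality of $Y$ itself.
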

\begin{proof}
Since the number of packets required by each client $c_i\in C'$ is
$|\overline{X_i}|=\frac{(N-1)(N-2)}{2}$, the number of packets
received by client $c_i$ should be at least $\frac{(N-1)(N-2)}{2}$,
otherwise, $c_i$ cannot get the complete information. In other
words, {\small\begin{eqnarray}
\sum_{i'\in\{1,\cdots,N\}\backslash\{i\}}y_{i'}\geq
\frac{(N-1)(N-2)}{2}
\end{eqnarray}}

By considering all the clients in $C'$, we have
{\small\begin{eqnarray}
\sum_{i=1}^k\sum_{i'\in\{1,\cdots,N\}\backslash\{i\}}y_{i'}&\geq& \frac{(N-1)(N-2)k}{2}\notag\\
k\sum_{i'=1}^Ny_{i'}-\sum_{i'=1}^{k}y_{i'}&\geq& \frac{(N-1)(N-2)k}{2}
\end{eqnarray}}
That is
{\small\begin{eqnarray}\label{ineq.total}
\sum_{i'=1}^Ny_{i'}\geq \frac{(N-1)(N-2)}{2}+\frac{\sum_{i'=1}^{k}y_{i'}}{k}
\end{eqnarray}}

According to \cite{Wang2012b}, for each client $c_i\in C'$, the
packets received from other $k-1$ clients in $C'\backslash\{c_i\}$
should satisfy
{\small\begin{eqnarray} \sum_{c_{i'}\in
C'\backslash\{c_i\}}y_{i'}\geq \binom{k-1}{2}
\end{eqnarray}}
By considering all the clients in $C'$, we have
{\small \begin{eqnarray}
\sum_{i=1}^k\sum_{i'\in\{1,\cdots,k\}\backslash\{i\}}y_{i'}&\geq & \frac{k(k-1)(k-2)}{2}\notag\\
(k-1)\sum_{i'=1}^ky_{i'}&\geq & \frac{k(k-1)(k-2)}{2}
\end{eqnarray}}
That is
{\small\begin{eqnarray}\label{ineq.part}
\sum_{i'=1}^ky_{i'}\geq  \lceil{\frac{k(k-2)}{2}}\rceil
\end{eqnarray}}

According to Eq.~(\ref{ineq.total}) and Eq.~(\ref{ineq.part}), we
can obtain that
{\small\begin{eqnarray}
\sum_{i'=1}^Ny_{i'}\geq \frac{(N-1)(N-2)}{2}+
\lceil{\frac{k-2}{2}}\rceil
\end{eqnarray}}

We thus proved Lemma~\ref{lemma.low}.
\end{proof}

We can also get the following Theorem.
\begin{Theorem}\label{optimal.number}
The optimal number of transmissions required for partial
third-party information exchange problem that minimizes the number of
transmissions is
{\small\begin{eqnarray}\label{equal} Y^{opt}_{min}=
\frac{(N-1)(N-2)}{2}+ \lceil{\frac{k-2}{2}}\rceil
\end{eqnarray}}
\end{Theorem}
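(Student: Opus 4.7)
The plan is to exhibit a transmission scheme whose total number of transmissions matches the lower bound from Lemma~\ref{lemma.low}. Since that lemma already gives $Y_{min}\ge \frac{(N-1)(N-2)}{2}+\lceil (k-2)/2\rceil$, the work left is purely achievability: construct an explicit assignment of counts $(y_1,\ldots,y_N)$ whose sum equals the right-hand side of (\ref{equal}), together with an encoding that lets every $c_i\in C'$ decode $\overline{X_i}$.

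First I would choose $(y_1,\ldots,y_N)$ so that the two counting inequalities (\ref{ineq.total}) and (\ref{ineq.part}) used in the lower-bound proof are met with (near-)equality. Concretely, I would set $\sum_{i=1}^{k} y_i = \lceil k(k-2)/2\rceil$, distributed as evenly as possible over $C'$ (each client in $C'$ sending either $\lfloor (k-2)/2\rfloor$ or $\lceil (k-2)/2\rceil$ packets, with the multiplicities adjusted in the two parities of $k$ so the sum comes out correctly), and let the remaining $N-k$ clients in $C\setminus C'$ contribute exactly $\frac{(N-1)(N-2)}{2}+\lceil (k-2)/2\rceil-\lceil k(k-2)/2\rceil$ packets in aggregate. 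A short check shows this residual is non-negative whenever $k\le N$ and can be distributed across $C\setminus C'$ so that every $c_i\in C'$ receives at least $\frac{(N-1)(N-2)}{2}=|\overline{X_i}|$ packets from the other clients, which is the cut-capacity condition needed for decoding. This step is detailed in Section~\ref{Sec.optimal}.

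Counts alone are only necessary; the substantive content is ensuring that the received coded packets carry the right \emph{rank}. For that I would invoke the deterministic encoding of Section~\ref{Sec.encoding1}: given the $y_i$'s above, construct coefficient vectors over a sufficiently large finite field such that, for each $c_i\in C'$, the matrix formed by the coded packets arriving at $c_i$, after projecting out the components corresponding to $X_i$, has full column rank on $\overline{X_i}$. Combined with Lemma~\ref{lemma.low}, this achievability forces equality in (\ref{low}) and yields the theorem.

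The main obstacle, and the reason this is not a one-line corollary of Lemma~\ref{lemma.low}, is precisely this last step: when $k$ is small, some $y_i$ with $i\le k$ may be zero or very small, so the cross-traffic originating from $C\setminus C'$ (clients that themselves do not need to decode) must supply most of the missing information, and one must justify that a single encoding simultaneously delivers full rank on $\overline{X_i}$ for every $c_i\in C'$. I would therefore treat the counting construction above as lightweight bookkeeping and defer the algebraic core of the argument to the explicit field-element choice presented in Section~\ref{Sec.encoding1}, which is where the genuine difficulty lies.
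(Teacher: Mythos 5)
Your proposal is correct and follows essentially the same route as the paper: the converse is Lemma~\ref{lemma.low}, and achievability is obtained by an explicit assignment of the $y_i$ summing to $\frac{(N-1)(N-2)}{2}+\lceil\frac{k-2}{2}\rceil$ (the paper's Definition~\ref{DD.trans}, which differs from your ``as even as possible'' split over $C'$ only in that it gives $c_k$ zero packets when $k$ is odd) together with the deterministic pairwise code of Section~\ref{Sec.encoding1}. The paper likewise treats the counting as bookkeeping (Lemma~\ref{lemma2}) and places the decoding argument in Lemma~\ref{lemma1}, exactly where you defer it.
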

\begin{proof}
According to Lemma~\ref{lemma.low}, we know that the lower bound of
the minimum number of transmissions required is
$\frac{(N-1)(N-2)}{2}+ \lceil{\frac{k-2}{2}}\rceil$. Thus, we get
the optimal minimum number of transmissions, as denoted in
Eq.~(\ref{equal}).
\end{proof}

In the following sections, we will show that the above lower bound can be
achieved with an optimal transmission scheme and a deterministic encoding strategy.

\section{An Optimal Transmission Scheme}\label{Sec.optimal}
In this section, we first propose a feasible transmission scheme. We
then prove that the proposed transmission scheme can achieve the
minimum number of transmissions required for partial third-party information exchange problem.

\subsection{A Feasible Transmission Scheme}\label{Sec.optimal.trans}
We first describe the transmission scheme, which determines the
number of packets that each client should send.
\begin{DD}\label{DD.trans}
{\bf The transmission scheme}: the number of packets sent by each client is
{\small\begin{eqnarray}\label{scheme}
y_{i}=\left\{
\begin{aligned}
&\lceil{\frac{k}{2}-1}\rceil,\mbox{ if }1\leq i< k\\
&\frac{k}{2}-1, \mbox{ if i=k and k is even}\\
&0, \mbox{ if i=k and k is odd}\\
&N+k-i-1,\mbox{ if }k+1\leq i\leq N
\end{aligned}
\right.
\end{eqnarray}}
\end{DD}

We can get the following lemma.
\begin{lemma}\label{lemma.feasible}
The transmission scheme defined in Definition~\ref{DD.trans} is a
feasible solution for the partial third-party information exchange problem.
\end{lemma}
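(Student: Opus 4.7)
The plan is to establish feasibility by verifying the standard network-coded data exchange cut-set condition: for every intended receiver $c_j \in C'$ and every non-empty $S \subseteq C \setminus \{c_j\}$,
\[
\sum_{c_i \in S} y_i \;\geq\; \binom{|S|}{2}.
\]
This is necessary because $\binom{|S|}{2}$ is precisely the number of packets $x_{a,b} \in \overline{X_j}$ whose only knowers lie in $S$ (those with $\{a,b\} \subseteq S$, which automatically excludes $j$ since $j \notin S$); and it is sufficient in the sense that, once it holds, random linear network coding over a sufficiently large field lets every $c_j \in C'$ recover $\overline{X_j}$. The deterministic code built in Section \ref{Sec.encoding1} is a later explicit realization, but the lemma itself only asserts existence of some encoding matching the counts of Definition \ref{DD.trans}.

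To verify the inequalities, I would partition $S = A \cup B$ with $A = S \cap C'$ of size $a$ and $B = S \cap (C \setminus C')$ of size $b$; because $c_j \in C' \setminus A$, we always have $a \leq k - 1$. From Definition \ref{DD.trans}, $y_i = \lceil (k-2)/2 \rceil$ for $i < k$, while $y_k$ equals $0$ for odd $k$ and $(k-2)/2$ for even $k$, so $\sum_{c_i \in A} y_i$ equals $a(k-2)/2$ when $k$ is even, and equals either $a(k-1)/2$ (if $c_k \notin A$) or $(a-1)(k-1)/2$ (if $c_k \in A$) when $k$ is odd. Since $y_i = N+k-i-1$ is strictly decreasing on $\{c_{k+1},\ldots,c_N\}$, the worst case for the $B$-side sum is $B = \{c_{N-b+1},\ldots,c_N\}$, giving $\sum_{c_i \in B} y_i \geq b(2k+b-3)/2$.

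With these formulas I would check the cut inequality in two subcases. When $b = 0$, the needed inequality reduces in every parity and membership branch to $a \leq k - 1$, with equality attained precisely when $S = C' \setminus \{c_j\}$, thereby matching the tight cases already appearing in the proof of Lemma \ref{lemma.low}. When $b \geq 1$, adding the $B$-contribution $b(2k+b-3)/2$ to the $A$-side lower bound and subtracting $\binom{a+b}{2}$ yields after cancellation a nonnegative expression driven by $b(k-1) \geq 0$ and $a \leq k-1$; I expect to handle this by direct expansion without further subcases, since the quadratic growth of $y_i$ on the tail block creates ample slack as soon as any high-index client is included in $S$.

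The main obstacle will be the odd-$k$ sub-case with $c_k \in A$: here $y_k = 0$ costs the $A$-side a full $(k-1)/2$ term, so I must use the tighter bound $(a-1)(k-1)/2$ rather than the naive $a(k-1)/2$, and then verify $(a-1)(k-1)/2 \geq a(a-1)/2$. This simplifies to $a \leq k-1$, which is exactly the calibration provided by the hypothesis $c_j \in C' \setminus A$, and it is precisely this tight margin that allowed Definition \ref{DD.trans} to set $y_k = 0$ (saving one transmission) without destroying feasibility in the odd-$k$ case. Once this parity-specific boundary is handled, the remaining verifications are essentially routine algebra, and feasibility of the scheme follows.
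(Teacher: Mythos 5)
Your proposal is correct and follows essentially the same route as the paper: both reduce feasibility to the cut-set condition $\sum_{c_i\in S} y_i \geq \binom{|S|}{2}$ for every $S\subseteq C\setminus\{c_j\}$ (the paper cites this as the if-and-only-if criterion from \cite{Wang2012b}) and then verify it by bounding the sum from below on the worst-case subsets, using that $y_i$ is constant on $C'$ and decreasing on $C\setminus C'$. Your $A$/$B$ decomposition and explicit treatment of the odd-$k$, $c_k\in A$ branch is in fact more complete than the paper's verification, which only writes out the even-$k$ case and splits on $|S|<k$ versus $|S|\geq k$.
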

\begin{proof}
We prove the above lemma by only considering the case when $k$ is
even. The case when $k$ is odd can be proved in a similar way.

According to \cite{Wang2012b}, there exists a feasible code design
to make sure the client $c_i\in C'$ can successfully decode/obtain
the complete information, if and only if the total number of packets
received from any other $l$ clients in $C\backslash \{c_i\}$ is at
least $\binom{l}{2}$. In other words, the feasible solution of the
partial third-party information exchange requires that for any
$c_i\in C'$, {\small\begin{eqnarray}\label{condition}
\sum_{t=1}^{l}y_{i_t}\geq \binom{l}{2}, \mbox{ for } \forall
\{c_{i_1},c_{i_2},\cdots,c_{i_l}\}\subseteq C\backslash \{c_i\}
\end{eqnarray}}

According to Eq.~(\ref{scheme}), for $\forall
\{c_{i_1},c_{i_2},\cdots,c_{i_l}\}\subseteq C\backslash \{c_i\}$, when $l<k$, we
have
{\small\begin{eqnarray} \sum_{t=1}^ly_{i_t}\geq
(\frac{k}{2}-1)l \geq \frac{l(l-1)}{2}=\binom{l}{2}
\end{eqnarray}}

When $l\geq k$, we have {\small\begin{eqnarray}
\sum_{t=1}^ly_{i_t}&\geq& \sum_{j\in C'\backslash \{i\}} y_j+\sum_{j=N-l+k}^Ny_j\notag \\
&=&\frac{l(l-1)}{2}=\binom{l}{2}
\end{eqnarray}}

We thus prove that the transmission scheme
in Definition~\ref{DD.trans} is a feasible transmission
scheme, i.e., there exists a feasible code design to make sure with
the above transmission scheme, the clients in $C'$ can obtain their
``wanted" packets.
\end{proof}

\subsection{Performance Analysis}
We now prove that the proposed transmission scheme can achieve the
minimum number of transmissions for the partial third-party
information exchange problem as specified in Theorem 1. To avoid
confusion, we use $Y_p$ to denote the number of transmissions
required by the proposed transmission scheme. According to
Definition~\ref{DD.trans}, we can have the following lemma.
\begin{lemma}\label{lemma2}
The number of transmissions required with the transmission scheme defined in Definition~\ref{DD.trans} is
{\small\begin{eqnarray}\label{upp}
Y_{p}=\frac{(N-1)(N-2)}{2}+\lceil{\frac{k-2}{2}}\rceil
\end{eqnarray}}
\end{lemma}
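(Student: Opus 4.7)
The plan is a direct computation: sum $y_i$ as defined in Definition~\ref{DD.trans} and verify the resulting closed form matches Eq.~(\ref{upp}). I would split $\sum_{i=1}^N y_i$ into two blocks, namely $S_1 := \sum_{i=1}^{k} y_i$ (the clients in $C'$) and $S_2 := \sum_{i=k+1}^{N} y_i$ (the helpers outside $C'$), and evaluate each separately.

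For $S_2$, substituting $y_i = N+k-i-1$ as $i$ runs from $k+1$ to $N$ produces the arithmetic sequence $N-2,\, N-3,\,\ldots,\, k-1$, so
\begin{eqnarray*}
S_2 \;=\; \sum_{m=k-1}^{N-2} m \;=\; \frac{(N-1)(N-2)}{2} - \frac{(k-1)(k-2)}{2}.
\end{eqnarray*}
For $S_1$ I would treat the two parities of $k$ separately, since the definition branches there. If $k$ is even, all $k$ clients in $C'$ contribute $\lceil k/2-1\rceil = (k-2)/2$, giving $S_1 = k(k-2)/2$; if $k$ is odd, only the first $k-1$ clients contribute, each sending $\lceil k/2-1\rceil = (k-1)/2$, giving $S_1 = (k-1)^2/2$.

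Combining the two pieces, in the even case
\begin{eqnarray*}
S_1+S_2 &=& \frac{(N-1)(N-2)}{2} + \frac{k(k-2) - (k-1)(k-2)}{2} \\
&=& \frac{(N-1)(N-2)}{2} + \frac{k-2}{2},
\end{eqnarray*}
which equals the claimed value since $\lceil(k-2)/2\rceil = (k-2)/2$ for even $k$. In the odd case a parallel simplification reduces $(k-1)^2 - (k-1)(k-2)$ to $k-1$, yielding $(N-1)(N-2)/2 + (k-1)/2$, which again matches $\lceil(k-2)/2\rceil = (k-1)/2$ for odd $k$.

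Since the whole argument is bookkeeping, there is no real obstacle; the only place I would be careful is lining up the arithmetic-series index in $S_2$ (off-by-one errors on the endpoints $k-1$ and $N-2$ are easy to make) and confirming that the parity-dependent expressions for $S_1$ collapse to the single unified ceiling $\lceil(k-2)/2\rceil$ after adding $S_2$.
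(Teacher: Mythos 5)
Your computation is correct and follows essentially the same route as the paper: both split the sum into the $C'$ block and the helper block, handle the two parities of $k$ separately, and reduce to the arithmetic series $\sum_{i=k+1}^{N}(N+k-i-1)=\frac{(N-1)(N-2)}{2}-\frac{(k-1)(k-2)}{2}$. The only cosmetic difference is that you factor out the helper sum $S_2$ as a common piece across both parity cases, whereas the paper recomputes the full total within each case.
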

\begin{proof}
We analyze the number of transmissions required with the proposed
transmission scheme by considering two cases: 1) $k$ is even and 2)
$k$ is odd.

Case 1 ($k$ is even): According to
Eq.~(\ref{scheme}), the total number of transmissions sent by all
the clients can be expressed as
{\small\begin{eqnarray}
&&\sum_{i=1}^k(\frac{k}{2}-1)+\sum_{i=k+1}^N(N+k-i-1)\notag\\
&=&\frac{(N-1)(N-2)}{2}+\frac{k-2}{2}
\end{eqnarray}}
Case 2 ($k$ is odd): According to Eq.~(\ref{scheme}), the total
number of transmissions required for this case is
{\small\begin{eqnarray}
&&\sum_{i=1}^{k-1}\lceil\frac{k}{2}-1\rceil+\sum_{i=k+1}^N(N+k-i-1)\notag\\
&=&\frac{(k-1)^2}{2}+\frac{(N-k)(N+k-3)}{2}\notag\\
&=&\frac{(N-1)(N-2)}{2}+\lceil{\frac{k-2}{2}}\rceil
\end{eqnarray}}
Thus, the number of transmissions required is
{\small\begin{eqnarray}
Y_{p}=\frac{(N-1)(N-2)}{2}+\lceil{\frac{k-2}{2}}\rceil\notag
\end{eqnarray}}
\end{proof}

We then get the following Theorem.
\begin{Theorem}\label{optimal.theorem}
The proposed transmission scheme achieves the optimal solution of the partial third-party information exchange problem.
\end{Theorem}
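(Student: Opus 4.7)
The plan is to assemble the optimality claim from three pieces already in place. Theorem~\ref{optimal.number} establishes that any feasible scheme needs at least $\frac{(N-1)(N-2)}{2} + \lceil\frac{k-2}{2}\rceil$ transmissions; Lemma~\ref{lemma.feasible} shows that the particular scheme in Definition~\ref{DD.trans} is feasible; and Lemma~\ref{lemma2} computes that it uses exactly $Y_p = \frac{(N-1)(N-2)}{2} + \lceil\frac{k-2}{2}\rceil$ transmissions. Chaining these three facts yields $Y_p = Y^{opt}_{min}$, and optimality follows at once.

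Concretely, I would first invoke Lemma~\ref{lemma.feasible} to place the scheme of Definition~\ref{DD.trans} in the set of admissible transmission schemes, so that the lower bound of Theorem~\ref{optimal.number} applies to it and gives $Y_p \geq Y^{opt}_{min}$. I would then apply Lemma~\ref{lemma2} to obtain the matching upper bound $Y_p \leq Y^{opt}_{min}$, noting that the even-$k$ and odd-$k$ subcases have both been handled there, so no further case analysis is needed in the present theorem. The sandwich immediately yields $Y_p = Y^{opt}_{min}$, which is precisely the assertion of Theorem~\ref{optimal.theorem}.

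There is essentially no fresh technical obstacle at this stage; the real work was discharged in the earlier lemmas. The only point I would flag explicitly, to avoid any appearance of circularity, is that the decodability criterion $\sum_{t=1}^{l} y_{i_t} \geq \binom{l}{2}$ used in Lemma~\ref{lemma.feasible} to certify feasibility is the same condition that underlies the counting argument of Lemma~\ref{lemma.low} (and hence Theorem~\ref{optimal.number}). Because the two results therefore share a common notion of admissible scheme, the comparison $Y_p = Y^{opt}_{min}$ is a genuine optimality statement and not merely a formal identity, completing the proof.
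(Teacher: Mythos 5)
Your proposal is correct and follows essentially the same route as the paper: the paper's own proof of Theorem~\ref{optimal.theorem} is a one-line combination of the lower bound from Theorem~\ref{optimal.number}, the feasibility guaranteed by Lemma~\ref{lemma.feasible}, and the exact count $Y_p$ from Lemma~\ref{lemma2}. Your additional remark that the feasibility certificate and the lower-bound counting rest on the same admissibility condition is a sensible clarification, but it does not change the argument.
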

\begin{proof}
As the proposed feasible transmission scheme achieves the minimum number of transmissions in Eq.(\ref{equal}), the derived bound is thus reachable and the proposed transmission scheme achieves the optimal solution.
\end{proof}

\section{A Deterministic Network Code Design}\label{Sec.encoding1}

The above transmission scheme only gives the number of transmissions to be sent by each client. In this section, we will design a deterministic encoding strategy to decide the encoded packets that each client should send, so as to make sure that
with the above transmission scheme, each client in $C'$ can
successfully decode/obtain all its ``wanted" packets.

\begin{DD}\label{DD.encoding}
Each client encodes the packets according to the following rules, where the number of packets sent by each client is determined by
Definition~\ref{DD.trans}.

(1) For client $c_i$, where $1\leq i\leq k$, the $j$-th packet sent
by $c_i$ is
\begin{eqnarray}\label{encode.1}
x_{i,i\%k+1}\oplus x_{i,(i+j)\%k+1}
\end{eqnarray}

(2) For client $c_i$, where $k<i\leq N$,
\begin{itemize}
\item if $j<k$, the $j$-th packet sent by $c_i$ is
\begin{eqnarray}\label{encode.2}
x_{1,i}\oplus x_{1+j,i}
\end{eqnarray}
\item if $j\geq k$, the $j$-th packet sent by $c_i$ is
\begin{eqnarray}\label{encode.3}
x_{i,i+j-k+1}
\end{eqnarray}
\end{itemize}
\end{DD}

Note that the proposed code design is a simple pairwise coding (i.e.
by encoding at most two packets only), which can be implemented
easily with XOR operation.

Consider an example with $6$ clients, where the first $3$ clients
want to get the complete CSI. According to the above definitions,
the transmission scheme and the encoded packets sent by each client
are shown in Table~\ref{example}.

\begin{table}[ht]
\caption{The transmission scheme and code design for $N=6,k=3$} 
\centering 
\begin{tabular}{|l|l|l|l|}
\hline
& Initial Information & $y_i$ & Code design \\ \hline
$c_1$& $x_{1,2},x_{1,3},x_{1,4},x_{1,5},x_{1,6}$ & 1 & $x_{1,2}\oplus x_{1,3}$\\
\hline
$c_2$& $x_{1,2},x_{2,3},x_{2,4},x_{2,5},x_{2,6}$ & 1 & $x_{2,3}\oplus x_{2,1}$\\
\hline
$c_3$& $x_{1,3},x_{2,3},x_{3,4},x_{3,5},x_{3,6}$ & 0 & \\
\hline
\multirow{2}{*}{$c_4$} &
\multirow{2}{*}{$x_{1,4},x_{2,4},x_{3,4},x_{4,5},x_{4,6}$}&
\multirow{2}{*}{4} & $x_{1,4}\oplus x_{2,4}$\mbox{, }$x_{1,4}\oplus x_{3,4}$\\
& & &$x_{4,5}$\mbox{, }$x_{4,6}$ \\
\hline
\multirow{2}{*}{$c_5$} &
\multirow{2}{*}{$x_{1,5},x_{2,5},x_{3,5},x_{4,5},x_{5,6}$}&
\multirow{2}{*}{3} & $x_{1,5}\oplus x_{2,5}$\mbox{, }$x_{1,5}\oplus x_{3,5}$ \\
 & &  & $x_{5,6}$ \\
\hline
\multirow{1}{*}{$c_6$} &
\multirow{1}{*}{$x_{1,6},x_{2,6},x_{3,6},x_{4,6},x_{5,6}$}&
\multirow{1}{*}{2} & $x_{1,6}\oplus x_{2,6}$\mbox{, }$x_{1,6}\oplus x_{3,6}$ \\
\hline
\end{tabular}\label{example}
\end{table}

We can also prove the following lemma.
\begin{lemma}\label{lemma1}
With the code design in Definition~\ref{DD.encoding} and the
transmission scheme in Definition~\ref{DD.trans}, every client in
$C'$ can successfully decode and obtain the complete CSI in $X$.
\end{lemma}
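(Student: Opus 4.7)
The plan is to verify decodability for each $c_m \in C'$ by splitting its wanted set $\overline{X_m}$ into three groups according to the location of the two indices of each missing CSI, because the encoding in Definition~\ref{DD.encoding} treats the resulting families differently. Fix $c_m \in C'$. Every $x_{i,j} \in \overline{X_m}$ lies in exactly one of: (a) $i,j \in \{k+1,\ldots,N\}$; (b) exactly one of $i,j$ lies in $C' \setminus \{c_m\}$ and the other lies in $\{k+1,\ldots,N\}$; or (c) $i,j \in C' \setminus \{c_m\}$. I would dispatch the three groups in this order because values recovered in earlier groups serve as seeds for later ones.

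For group~(a), branch~(3) of Definition~\ref{DD.encoding} sends, for each $c_i$ with $i>k$, the uncoded singletons $x_{i,i+1}, x_{i,i+2},\ldots, x_{i,N}$ (obtained by letting $j$ range from $k$ up to $y_i=N+k-i-1$). Taking the union over $i=k+1,\ldots,N$ yields exactly the set of CSI between two clients outside $C'$, so $c_m$ reads group~(a) off directly.

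For group~(b), branch~(2) yields, for each $c_i$ with $i>k$, the XOR packets $x_{1,i}\oplus x_{2,i},\, x_{1,i}\oplus x_{3,i},\,\ldots,\, x_{1,i}\oplus x_{k,i}$ (note $y_i\geq k-1$ for every such $i$). Since $c_m$ knows $x_{m,i}$, the packet whose second operand is $x_{m,i}$ (or the trivial case $m=1$) unlocks $x_{1,i}$; XOR-ing this against each remaining packet recovers every $x_{l,i}$ with $l\in C'\setminus\{c_m\}$, covering group~(b).

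Group~(c) is the heart of the argument and is handled via branch~(1) by chain decoding. For every $c_l\in C'\setminus\{c_m\}$, the packets of $c_l$ share the anchor $x_{l,l\%k+1}$ and XOR it cyclically against the other $x_{l,\cdot}$. I would initialize the chain using $c_m$'s own knowledge $x_{m,l}=x_{l,m}$: whenever one of $c_l$'s packets contains the operand $x_{l,m}$, $c_m$ extracts the anchor $x_{l,l\%k+1}$ and then peels off every subsequent packet of $c_l$; each new $x_{l,\cdot}$ obtained this way equals some $x_{\cdot,l}$ and may in turn become the seed that unlocks the packet family sent by another member of $C'$. The induction runs until every pair in $C'\setminus\{c_m\}$ has been recovered.

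The main obstacle is group~(c). A rigorous argument must verify two things about the cyclic design: (i) the sliding window of width $y_l=\lceil k/2-1\rceil$ around the anchor $l\%k+1$ sweeps enough indices in $\{1,\ldots,k\}$ that every pair $(l,l')$ with $l,l'\in C'$ appears as an operand of at least one packet from $c_l$ or $c_{l'}$; and (ii) the propagation started from $c_m$'s seed knowledge actually reaches every sender in $C'\setminus\{c_m\}$ rather than stalling in a cyclic dead-end. The proof naturally splits along the parity cases of Definition~\ref{DD.trans}: in the odd-$k$ case $c_k$ sends nothing, so both coverage and propagation must bypass it, and the sub-case $m=k$ requires a separate treatment because $c_m$ then owns no anchor in the cycle and must rely entirely on packets from the other $k-1$ senders of $C'$.
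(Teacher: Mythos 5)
Your decomposition of $\overline{X_m}$ into the three groups, and your arguments for groups~(a) and~(b), match the paper's proof essentially step for step: the paper likewise observes that the packets in Eq.~(\ref{encode.3}) are uncoded, and that each packet in Eq.~(\ref{encode.2}) contains $x_{1,i'}$ together with some $x_{l,i'}$, $l\le k$, so that any $c_m\in C'$ can seed the XOR family of $c_{i'}$ from its own $x_{m,i'}$ and peel off the rest. (One small structural difference: the paper also explicitly verifies first that every packet of $X$ appears in at least one transmission, which your group-(a)/(b)/(c) coverage claims subsume.)

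The issue is group~(c), which you correctly identify as the heart of the matter but do not actually prove. You state that a rigorous argument ``must verify'' (i) that the cyclic windows of width $\lceil k/2-1\rceil$ cover every pair in $C'$ and (ii) that the chain decoding seeded by $c_m$'s side information propagates to every sender in $C'\setminus\{c_m\}$ without stalling, and you note the parity and $m=k$ sub-cases --- but you stop there. As written, the hardest part of the lemma is left as an obligation rather than discharged; note that the propagation in (ii) is genuinely nontrivial (already for $k=4$, client $c_1$ cannot decode $c_2$'s packet $x_{2,3}\oplus x_{2,4}$ until it has first recovered $x_{2,4}$ from $c_4$'s packet, so a multi-round peeling argument is unavoidable). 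The paper closes this gap differently: it observes that Eq.~(\ref{encode.1}) restricted to $c_1,\dots,c_k$ is exactly the code for the full third-party information exchange among the $k$ clients of $C'$ studied in \cite{Love2007,Wang2012b}, and inherits decodability from those references rather than re-proving it. To complete your proof you should either carry out the window-coverage and propagation induction you outline (including the odd-$k$, $m=k$ case where $c_k$ transmits nothing), or explicitly invoke the known correctness of the $k$-client cyclic pairwise code as the paper does.
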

\begin{proof}
We first prove that every packet $x_{i,j}\in X$ is encoded in at
least one transmitted packet.
\begin{itemize}
\item When $i,j\leq k$, packet $x_{i,j}$ must be encoded in at least one
transmitted packet from client $c_i$ or $c_j$, similar to data
exchange among the $k$ clients in $C'$ \cite{Wang2012b,Love2007}.
\item When $i\leq k$ and $j>k$ (or $i> k$ and $j\leq k$), packet $x_{i,j}$
must be encoded in at least one packet sent by client $c_j$ (or
client $c_i$), according to Eq.~(\ref{encode.2}).
\item When $i,j\geq k$, packet $x_{i,j}$ must be sent by client
$c_{min\{i,j\}}$, according to Eq.~(\ref{encode.3}).
\end{itemize}
Thus, every packet in $X$ will be encoded in at least one
transmitted packet.

We now check the decoding process of the clients in $C'$. We first
consider client $c_1$ as follows:
\begin{itemize}
\item For the packets sent by client $c_i$, where $1<i\leq k$, $c_1$ must
be able to decode them, as this process is similar to the data
exchange among the $k$ clients in $C'$ \cite{Wang2012b,Love2007}.
\item According to Eq.~(\ref{encode.2}), $c_1$ also can decode the first
$k-1$ packets sent by $c_i$, where $i>k$, as packet $x_{1,i}$ is
participated in each of these packets and $x_{1,i}$ is initially
available at $c_1$. In addition, as the other $N-i$ packets sent by
$c_i$ are original ones, $c_1$ can obtain them directly. In other
words, $c_1$ can decode all the packets sent by any client $c_i$,
where $i>k$.
\end{itemize}
As all the packets in $X$ are participated in the packets sent by
the clients and $c_1$ can decode all the packets sent by the
clients, $c_1$ can thus decode/obtain all the CSI in $X$.

We then check the decoding process of client $c_i$, where $1<i\leq
k$. Similar to the decoding process of $c_1$, $c_i$ must be also
able to decode all the packets sent by $c_{i'}$ where $1\leq i'\neq
i\leq k$. In addition, according to Eq.~(\ref{encode.2}), the set of
the first $k-1$ packets sent by client $c_{i'}$, where $i'>k$, is
$\{x_{1,i'}\oplus x_{2,i'},x_{1,i'}\oplus x_{3,i'},\cdots,
x_{1,i'}\oplus x_{k,i'}\}$. In other words, packet $x_{i,i'}$ must
be encoded in one packet sent by $c_{i'}$ where $i\leq k,i'>k$.
Thus, client $c_i$ can decode all the first $k-1$ packets sent by
$c_{i'}$ where $i'>k$. Finally, as the last $N-i$ packets sent by
$c_{i'}$ are original packets, $c_i$ thus can get them directly.
That is, $c_i$ successfully decodes/obtains all the packets sent by
the other clients. As all the packets are participated in the
packets sent by the clients, $c_i$ gets the complete CSI in $X$.

To summarize, with the proposed transmission scheme and
encoding strategy, all the clients
in $C'$ can successfully obtain all the CSI in $X$, which
thus proved Lemma~\ref{lemma1}.
\end{proof}

Still considering the example in Table~\ref{example}, we can easily
verify that after receiving all the packets sent from other clients,
$c_1,c_2$ and $c_3$ can successfully decode their ``wanted" CSI
in $\overline{X_1},\overline{X_2}$ and $\overline{X_3}$
respectively.

Note that the proposed transmission scheme and the encoding strategy
can be implemented in a distributed manner. They only need to know the sequence of the clients and the indices set of the clients in $C'$.

\section{Performance Comparison}\label{Sec.comp}
We now compare the minimum number of transmissions required with and without network coding for various values of $k\leq N$ and $N=4,7,12,15$. As shown in
Table~\ref{table}, we can see that the number
of transmissions with network coding is much less than that without
network coding. Without network coding, the number of transmissions
required for $k\geq 3$ and $k=2$ are $|X|=\frac{N(N-1)}{2}$ and
$|X|-1$ respectively (because when $k\geq 3$, each packet in $X$ is
required by at least one client in $C'$; while when $k=2$, the two
clients in $C'$ must share one common packet). It can also be
verified easily that our result in Theorem
\ref{optimal.theorem} includes \cite{Love2007} as a special case,
where \cite{Love2007} considers to minimize the total number of
transmissions only when $k=N$. However, the encoding scheme proposed in this paper is totally different from \cite{Love2007,Wang2012b} due to different problem setting.

\begin{table}[ht]
\caption{The minimum number of transmissions with and without network coding} 
\centering 
\begin{tabular}{|c|c|c|c|c|c|c|c|} 
\hline
 &\multicolumn{5}{|c|}{with network coding (NC)}& \multicolumn{2}{|c|}{without NC} \\\hline
 & {\em k=2} & {\em k=4} & {\em k=7} & {\em k=10} & {\em k=15} &{\em k=2} & {\em $3\leq k\leq N$} \\ 
\hline 
{\em N=4} & 3 & 4 &  NA & NA &NA   &5  &6 \\ \hline
{\em N=7} & 15 & 16 &  18 & NA &NA &20 &21\\ \hline
{\em N=12} & 55 & 56 &  58 & 59 &NA&65 &66\\ \hline
{\em N=15}& 91 & 92 &  94 & 95 &98 &104&105\\ \hline
\end{tabular}
\label{table} 
\end{table}

\vspace{-0.05in}
\section{Conclusion}\label{Sec.conclusion}
In this paper, we aim to design a network coded transmission scheme
to minimize the total number of transmissions required for partial
third-party information exchange problem. We first derive the
minimum number of required transmissions for the partial third-party
information exchange. Then, we design an optimal transmission scheme
to determine the number of packets that each client should send so
as to achieve the optimal minimal number of transmissions. Finally,
a simple deterministic encoding strategy, based only on XOR
operation, is designed to make sure that with the proposed optimal
transmission scheme, all the clients that require the complete
information can eventually decode/obtain their ``wanted" packets.

\end{document}